\newcommand\sysname{Proof of Prestige\xspace}
\newcommand\pow{PoW\xspace}
\newcommand\pos{PoS\xspace}
\newcommand\pop{PoP\xspace}
\newcommand\blockchain{blockchain\xspace}
\newcommand{\eg}{\textit{e.g.,}\@\xspace}
\newcommand{\ie}{\textit{i.e.,}\@\xspace}
\def\first{({\it i})\xspace}
\def\second{({\it ii})\xspace}
\newcolumntype{L}{l<{\hspace{1cm}}}
\newcolumntype{C}{c<{\hspace{.3cm}}}
\newtheorem{theorem}{Theorem}
\begin{document}


\title{Proof-of-Prestige: A Useful Work Reward System for Unverifiable Tasks}

\author{Michał Król, Alberto Sonnino, Mustafa Al-Bassam, Argyrios Tasiopoulos, Ioannis Psaras\\University College London, United Kingdom}

\maketitle

\IEEEpubidadjcol

\begin{abstract}

As cryptographic tokens and altcoins are increasingly being built to serve as utility tokens, the notion of useful work consensus protocols, as opposed to number-crunching PoW consensus, is becoming ever more important. In such contexts, users get rewards from the network after they have carried out some specific task useful for the network. While in some cases the proof of some utility or service can be proved, the majority of tasks are impossible to verify.
In order to deal with such cases, we design “Proof-of-Prestige” (PoP)---a reward system that can run on top of  Proof-of-Stake blockchains. PoP introduces “prestige” which is a volatile resource and, in contrast to coins, regenerates over time. Prestige can be gained by performing useful work, spent when benefiting from services and directly translates to users minting power.
PoP is resistant against Sybil and Collude attacks and can be used to reward workers for completing unverifiable tasks, while keeping the system free for the end-users. We use two exemplar use-cases to showcase the usefulness of PoP and we build a simulator to assess the cryptoeconomic behaviour of the system in terms of prestige transfer between nodes.

\end{abstract}

\section{Introduction}
Following the recent success of Bitcoin~\cite{bitcoin}, a plethora of cryptocurrencies  have experienced an increase of popularity~\cite{sok}. There are over 1,900 cryptocurrencies one can invest in with the total market cap exceeding 280B USD\footnote{\url{https://www.investing.com/crypto/currencies}}. 
In contrast to resource-wasting Proof-of-Work~(\pow) protocols~\cite{bitcoin} or Proof-of-Stake~(\pos) binding users' minting power to the amount of their coins~\cite{bentov2016snow, david2017ouroboros, kiayias2017ouroboros}, a recent trend sees cryptocurrencies as an incentive method for users to perform useful work and create a shared economy environment. For instance, Filecoin~\cite{combinatorjune,protocol2017por} rewards miners for renting storage capacity, while Golem~\cite{golem} allows to rent out processing power and perform client's computation-heavy tasks. The vision is to create a decentralised system, where miners are incentivised to do useful work, secure the transactions and automatically receive rewards when tasks are completed. 

In the classic setup, useful work can be performed by a \emph{``contributor"} for a \emph{``beneficiary"}. The beneficiary submits a task and a reward to the blockchain that is used to assure payment for service \cite{airtnt} \cite{krol2018spoc}. When the contributor correctly completes the requested task, the payment is unlocked. However, currently, multiple cloud platforms do not expect their users to pay for the services (\ie Facebook, Youtube). As a result, in order to attract more users, blockchain-based platforms must keep this featue as well. It means involving a third party in the system to whom we refer to as a \emph{``motivator"}. The motivator benefits from increasing the size and popularity of the network and rewards contributors' useful work, while keeping it free for the end-users. For instance, on Steem \cite{steem} authors (contributors) create content for readers (beneficiaries). Readers can then signal interesting content using a ``vote-up'' button. Steem (motivator), which is primarily interested in increased viewership in order to increase its income from advertisements, rewards the most successful authors with an automatic coin transfer. 

Such a system presents multiple benefits. Beneficiaries do not have to pay for the content, the system remains open for any contributor to join, perform useful work and be paid according to their performance and contribution; the motivator on the other hand benefits from an open platform avoiding costly contracts, and contributor selection process. However, for that to work, the network must be able to automatically verify tasks. While completion of some tasks can be proven to a third party (\eg file storage \cite{protocol2017por}), in many cases this is impossible. For instance, it is not possible to prove that a file has been successfully transferred between any two untrusted nodes. In such cases, the motivator relies only on beneficiary acknowledgments to reward contributors and can be thus susceptible to Sybil attacks. In order to maximize their reward, contributors can create multiple fake identities claiming usefulness of their work. Moreover, even with access restriction techniques or voting power bounded to stake, users can collude and cross-acknowledge their potentially non-existing work.

In this paper, we present Proof-of-Prestige~(\pop) that aims to be a building block of cryptocurrencies based on useful work. \pop builds on \pos, but instead of binding user minting power to stake, the probability of minting a new block is determined by users' prestige. One can generate \emph{prestige} associated with each identity in the system from money or by performing useful work; \emph{prestige} is much more volatile and renewable resource than virtual currencies. In \pop, beneficiaries pay for services by transferring prestige to contributors, while avoiding spending coins. A task is considered as completed when acknowledged by its beneficiary. \pop can thus be used in wide variety of use-cases, securing the system against Colluding and Sybil attacks and avoiding artificial inflation of miners' power by completing dummy tasks. \pop is not meant to be a new consensus protocol, but rather a new system of assigning minting power to users that is compatible with already existing \pos protocols. Proof-of-Prestige, does not deal with the fair exchange problem between beneficiary and benefactor~\cite{dziembowski2018fairswap}, but rather offers a secure way to reward benefactors, by network operators. Our system builds on Proof-of-Stake and can be easily integrated in a distributed ledger using the latter (\Cref{sec:goals}). We present two variants of our system, simple and progressive mining, and show how they can be used in real-world scenarios.

We make the following contributions:
\begin{itemize}
\item \Cref{sec:pop} presents Proof-of-Prestige~(\pop), a novel type of proof of useful work that can run on top of any Proof-of-Stake \blockchain, and that is resilient against Sybil and Collude attacks.
\item \Cref{sec:simple_mining} and \Cref{sec:progressive_mining} introduce two types of minting power distribution, simple and progressive mining, to support a wide variety of scenarios.
\item \Cref{sec:use_case} presents and analyzes two key use-case applications, a publisher platform and a file distribution system, that benefit from our framework.
\item \Cref{sec:eval} provides an extensive evaluation of our protocol and prove its security and high performance. 
\end{itemize}


\section{Threat Model and Goals} \label{sec:goals}

We assume the following actors in our system:
\begin{itemize}
    \item \textbf{Beneficiaries:} End-users of the system which transfer prestige to the contributors in exchange for a task to be performed (such as distributing a file).
    \item \textbf{Contributors:} Nodes which perform tasks for beneficiaries in return for prestige.
    \item \textbf{Motivators:} System operators or users that submit tasks to the network. Motivators do not benefit from services directly, but rather from expanded network size (\ie crowdsourcing their own tasks).
\end{itemize}

Each user is represented by an identity and can act as contributor, beneficiary and/or motivator. We assume that none of the actors can be trusted; \ie they may attempt to steal funds, avoid making payments, create fake transfers, and create fake identities. At any given time, each party may drop, send, record, modify, and replay arbitrary messages in the protocol.

\sysname assumes an underlying blockchain to facilitate \emph{prestige} transfers without a trusted third party. We assume that the underlying blockchain is resistant to double-spending attacks, and guarantees liveness - that is, transactions submitted to the blockchain will be eventually processed, within some defined period of time.

\sysname has the following design goals:
\begin{itemize}
    \item \textit{Open membership.} Any system user is able to participate in the system as a contributor or beneficiary without prior approval by some third party.
    \item \textit{Creditless rewards.} A beneficiary can reward a contributor for completing a task without actual credit (i.e., virtual currency), but rather by increasing the chance of the contributor minting next blocks.
    \item \textit{Contributor incetivization.} Contributors are economically incentivized to perform tasks.
    \item \textit{Inflation control.} Given a predefined rule to determine the inflation rate, the rate of inflation of coins in the system cannot be changed by users.
    \item \textit{Sybil attack resistance.} Creating multiple identities cannot increase a user's minting power without obtaining more coins.
    \item \textit{Collude attack resistance.} Users cannot increase their total amount of prestige by colluding with each others.
\end{itemize}

Furthermore, \sysname requires a \pos consensus protocol that can assign custom weights to users, where each weight determines the probability of minting a new block (i.e., Algorand \cite{gilad2017algorand}, Tendermint \cite{kwon2014tendermint}, Hashgraph \cite{baird2016swirlds}). As prestige is a volatile resource, it cannot be used with \pos protocols which require stakers to deposit and lock their coins for a long period of time (such as Casper FFG~\cite{casper}), so that their deposit can be reduced if they are caught misbehaving. This is because prestige cannot be locked, as by design it automatically increases or decreases. However, it is compatible with protocols such as Ouroboros \cite{kiayias2017ouroboros} which do not require deposits, and adapt to the new stake mapping in the system every epoch, where coins (and prestige) do not have to be locked up and can be transferred at any time.

\section{Proof of Prestige}\label{sec:pop}
In PoP, users have two values associated with their accounts:
\begin{itemize}
    \item \textbf{Coins} - similar to other crypto-currencies (such as Bitcoin or Ether) or ERC20 tokens \cite{eip20}. Coins can be directly sent to or received by other users via transactions, as is the case with all crypto-currencies. More importantly, \emph{coins generate prestige over time.}
    \item \textbf{Prestige} - determines the probability of the user minting a new block. Prestige cannot be directly transferred between accounts, but is exchanged as a reward for performing useful work. Contributors gain prestige by completing tasks, and beneficiaries lose prestige by acknowledging services.
\end{itemize}

Both values are related and can influence one another. The coins generate prestige over time up to a \emph{Static Value} (see \Cref{sec:prestige}), while prestige determines the probability of minting the next block, similarly to stake in PoS. \emph{The probability of each user minting a block is thus proportional to their prestige.} Minting a new block, in turn, allows users to collect transaction fees, discover new coins and thus increase their total amount of coins. Prestige represents a much more volatile and renewable resource, while the amount of coins does not change over time unless minted or transferred in transactions. \emph{Prestige can be spent to benefit from services or gained when performing services for others.}

There is one note worth making regarding the difference of \pop to \pos as regards the ability of rich users to gain more from the system: while in \pos the amount of coins (stake) a user has is the only resource that determines who mines the new block, in \pop useful work can increase a users' prestige and therefore, the probability of minting new coins, even if the user starts with a low amount of coins/prestige. Therefore, although someone who buys more coins can enter the system with higher prestige (and hence, higher probability of minting new coins), this does not exclude users with less coins from building up prestige if they contribute to the system with useful work.

\subsection{Prestige} \label{sec:prestige}
When  user $U_i$ joins the system (\ie creates their identity), they start with no prestige $P_i=0$. With each new block (mined by any user in the system) the amount of prestige of every user in the system is increased by the number of coins in their wallet $C_i$, so that richer users generate prestige at higher rate. At the same time, in order to avoid prestige increasing indefinitely, we introduce a decay parameter $d$. The decay determines what percentage of current prestige is lost with each block. Specifically, the prestige of a user evolves over time according to a non-homogeneous, autonomous, affine, first order, Discrete Dynamical System (DDS) with prestige increment on time-slot $t$:
\begin{equation}\label{eq:prestige_change}
    \delta P_i^{t} = C_i^{t} - dP_i^{t-1},
\end{equation}
 where $t\geq1$ and $0<d<1$ is a tunable system parameter. We can see that prestige on time-slot $t$ can be written as:
\begin{equation} \label{eq:prestige_sum}
P_i^t = \sum\limits_{j=1}^{t}\delta P_i^j = P_i^{t-1}+\delta P_i^{t} =C_i+(1-d)P_i^{t-1}.
\end{equation}The fixed point(s) of \Cref{eq:prestige_sum} DDS can be easily derived by applying simple linearisation techniques. In detail, consider that $P_i^t=g(P_i^{t-1})$, then for a candidate fixed point $S_i$ we have that $S_i=g(S_i)$: 
\begin{equation}\label{eq:static_value}
    S_i = \frac{C_i}{d}.
\end{equation} In fact, $|g'(S_i)|<1$, \ie $g'(\cdot)=1-d<1$, and therefore $S_i$ is an attractor fixed point indicating the convergence of prestige DSS to $S_i$. The amount of prestige that a user can generate from coins is thus limited to $S_i$---called \emph{Static Value}---where increasing decay evens up prestige generated from coins (\Cref{fig:prestige_evolution}). The static value depends only on the amount of coins in user's wallet $C_i$ and the decay parameter $d$ and it therefore increases by acquiring more coins.

To increase their prestige above their Static Value users have to perform useful work, confirmed by a beneficiary for whom the task was completed. When performing useful work, users instantly get more prestige (see Prestige spikes in \Cref{fig:prestige_evolution}) and therefore, increase their chances of minting the new block. In turn, minting a new block results in extra coins and thus, in higher \emph{Static Value} (see User $U_2$ in \Cref{fig:prestige_evolution}).

In \pop users do not pay with coins when benefiting from/receiving a service, but with prestige that, even if depleted, will be slowly replenished (as long as the user has some coins). However, when user prestige is higher than its static value, the decay exceeds prestige gained from coins and the user loses prestige until they reach their static value again. The reduction in prestige is another desired property, as it incentivises users to keep on contributing to the system by performing useful work. The network acts thus as a closed-loop control system correcting users current prestige to their static values (\Cref{fig:prestige_evolution}).

Our system needs to be secure and resistant against Sybil and Collude Attacks (\Cref{sec:goals}). When having a fixed amount of coins, we claim that users cannot gain more prestige by creating Sybil identities:

\begin{theorem}\label{th:prestige}
Each user has no prestige gain incentives to divide their coins into multiple identities.
\end{theorem}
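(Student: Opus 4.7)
The plan is to exploit the linearity of the prestige recurrence in \Cref{eq:prestige_sum}. Suppose a user holding $C$ coins at a single identity instead distributes them across $n$ Sybil identities with coin balances $C_1, \ldots, C_n$ satisfying $\sum_{j=1}^n C_j = C$. Each identity independently evolves its prestige $P_j^t$ according to \Cref{eq:prestige_sum}. The key observation is that summing this recurrence over $j$ gives
\begin{equation*}
\sum_{j=1}^n P_j^t \;=\; \sum_{j=1}^n C_j + (1-d)\sum_{j=1}^n P_j^{t-1} \;=\; C + (1-d)\sum_{j=1}^n P_j^{t-1},
\end{equation*}
which is exactly the recurrence obeyed by the prestige of a single identity holding all $C$ coins.

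Next I would close the argument by induction on $t$. Newly created Sybil identities start with $P_j^0 = 0$, so the initial aggregate prestige equals $P^0 = 0$, matching the unified identity. By the recurrence above, aggregate prestige equals the single-identity prestige at every subsequent slot. The same linearity shows the aggregate Static Value is preserved: $\sum_j S_j = \sum_j C_j / d = C/d = S$, so the long-run equilibrium of the split configuration coincides with the unified configuration. Since minting probability is proportional to prestige, the user's combined minting probability across Sybils equals what a single unified identity would obtain.

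Finally, I would address the case where the split occurs after prestige has already accumulated (i.e., the user transfers coins from an existing identity to a freshly created one mid-operation). Because the new identity starts with zero prestige and coin transfers conserve $\sum_j C_j$, the pre-split aggregate prestige and total coins are both preserved across the transfer instant, so the linear DDS continues to govern the aggregate identically to the unified case from that moment on.

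The main obstacle I anticipate is not the algebra, which is essentially a one-line linearity argument, but rather delimiting exactly what the theorem claims: the statement concerns only passive prestige generation from coins, so I would be careful to note that useful-work rewards are orthogonal (Sybil resistance for those is handled by the collude-resistance arguments elsewhere), and to stress that the claim is about total prestige held by the attacker, not per-identity prestige, since an attacker with many weak identities still competes with the same aggregate weight in the \pos draw.
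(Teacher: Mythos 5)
Your proof is correct, and it is actually stronger than the one the paper gives. The paper's proof compares only the \emph{static values}: it notes that $S_{i,1}+\dots+S_{i,k}=\sum_q C_{i,q}/d = C_i/d = S_i$, i.e.\ the long-run limits of the split and unified configurations coincide, and stops there. You instead sum the full recurrence of \Cref{eq:prestige_sum} over the Sybil identities and observe that the aggregate obeys \emph{exactly} the single-identity recurrence with the same initial condition, so the aggregate prestige equals the unified prestige at \emph{every} time slot, not merely asymptotically. This closes a small gap left by the paper's argument, which on its face does not rule out a transient advantage during the approach to the fixed point (there is none, precisely by the linearity you exploit, but the paper does not say so). Your handling of a mid-operation split --- new identity starts at zero prestige, coins are conserved, so the aggregate state is continuous across the transfer and the same linear DDS governs it thereafter --- is likewise a case the paper does not address. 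Your closing remark correctly delimits the scope: this theorem concerns only passive generation from coins, and Sybil/collusion resistance for work-based prestige transfers is the subject of \Cref{th:mining} and the progressive-mining theorem. In short, same underlying observation (linearity of prestige in $C_i$), but you apply it to the whole trajectory rather than only to the fixed points, which buys a strictly stronger and more airtight conclusion.
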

\begin{proof}Let user $i$ divide their coins into $1,2,...,k$ identities according to $C_{i,1},C_{i,2},...,C_{i,k}$ respectively, such that $\sum\limits_{q=1}^kC_{i,q}=C_{i}$. Then the aggregated prestige of multiple identities at their limit will be:
\begin{align*}
S_{i,1}+S_{i,2}+...+S_{i,k} = \sum\limits_{q=1}^kC_{i,q}\slash d = C_i\slash d = S_i.
\end{align*}That is, the total prestige generated in the long-run by multiple identities equals the prestige achieved by a single identity. \end{proof}

\begin{figure}
  \includegraphics[width=1\linewidth]{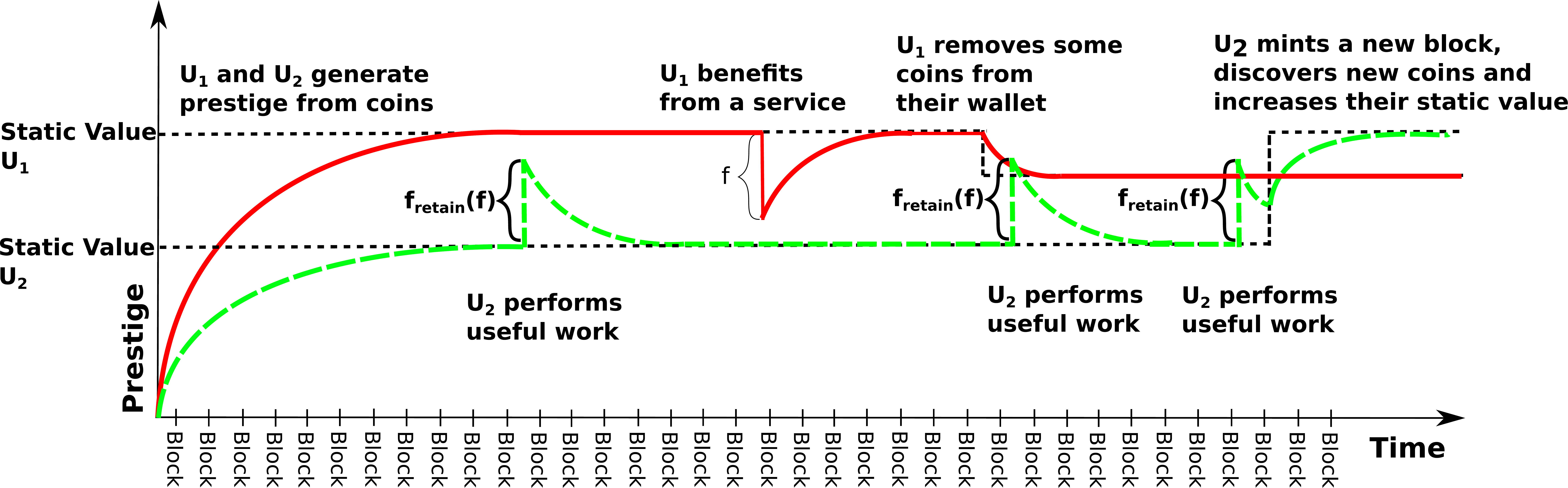}
  \caption{Example of evolution of the prestige and static value with time (expressed in blocks).}
  \label{fig:prestige_evolution}
\end{figure}

\subsection{Mining Overview} \label{sec:mining}
Users can act both as contributors and as beneficiaries and exchange services for prestige. The total change of prestige for useful work $\delta x_i^t$ of $U_i$ at block $t$ is thus given by prestige spent to benefit from services (being a beneficiary) and gained by performing useful work (being a contributor). 
\begin{equation}\label{eq:prestige_total}
  \begin{aligned}
    \delta x_i^t = x_{\textit{gained}}^t - x_{\textit{spent}}^t
  \end{aligned}
\end{equation}

The total amount of prestige that a user is spending is the sum of prestige spent on each service that $U_i$ benefited from $x_{spent}^i = \sum_{j=0}^{m}x_{ij}$, where the prestige fee transferred between each pair of nodes $x_{ij} = f$ can be predefined or negotiated between the beneficiary and the contributor. 

Analogically, the amount of prestige gained by a contributor is the sum of prestige gained on each service that $U_i$ performed, but is modified by a retain function $x_{gained}^i = \sum_{j=0}^{m}f_{\textit{retain}}(x_{ji})$. The retain function defines what percentage of received prestige will be kept by a contributor. We define different patterns of useful work (represented by simple and progressive mining) with different retain functions explained in detail in the following Sections.
When a task is completed by a contributor, the beneficiary recognizes it by generating a signed \emph{acknowledgment}, and sends it to the contributor (\Cref{sec:acknowledgments}). The contributor can then upload the acknowledgment to the blockchain to register the completed task and get the corresponding prestige. 

\begin{figure}
  \centering
  \includegraphics[width=1\linewidth]{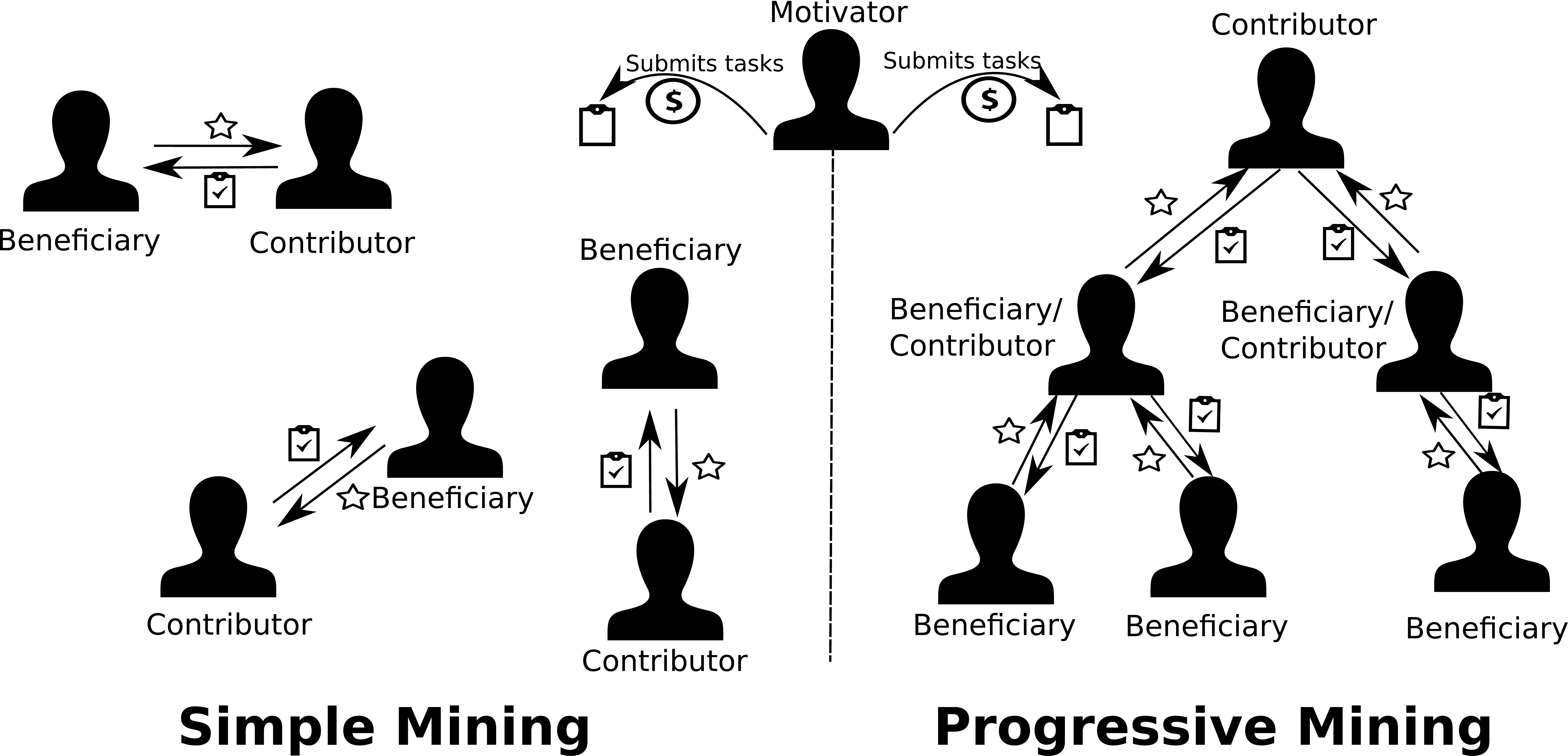}
  \caption{Simple and Progressive Mining.}
  \label{fig:simple_progressive}
\end{figure}

When performing useful work during block $k$ and as long as the amount of prestige transferred by the beneficiary $U_i$ is lower or equal to the prestige retained by the contributor $U_j$, so that $x_{ij}^k \geq f_\textit{retain}(x_{ji})^k$, then the aggregate amount of prestige possessed by $U_i$ \emph{and} $U_j$ does not increase. This means that \emph{users cannot increase their minting power by cross-acknowledging their work and the system is resistant against both the Sybil and Collude Attacks.} 

\begin{theorem}\label{th:mining}
There are no prestige gains produced by prestige transfers between users.
\end{theorem}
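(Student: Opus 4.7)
The plan is to show that for any single prestige-transfer event between a beneficiary and a contributor, the net change in the combined prestige of the two parties is non-positive; iterating this observation over all transfers in a block then gives the claim. Note that the statement must be read as an aggregate (system-wide) claim rather than a per-user claim: individual contributors obviously do gain prestige from transfers, but the \emph{total} prestige in the system does not.

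First, I would fix a block $k$ and consider a single acknowledgment in which beneficiary $U_i$ pays a fee $x_{ij}^k$ to contributor $U_j$. By the protocol described in \Cref{sec:mining}, $U_i$'s prestige decreases by exactly $x_{ij}^k$, while $U_j$'s prestige increases by $f_{\textit{retain}}(x_{ij}^k)$. Summing these two contributions gives a net change of $f_{\textit{retain}}(x_{ij}^k) - x_{ij}^k$, which is non-positive under the standing hypothesis $x_{ij}^k \geq f_{\textit{retain}}(x_{ij})^k$ already invoked in the paragraph preceding the theorem. In other words, the retain function is designed so that contributors never keep more prestige than beneficiaries spent.

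Second, I would lift this per-transfer inequality to the whole system. Using \Cref{eq:prestige_total}, the aggregate prestige change due to useful-work transfers at block $k$ is
\begin{equation*}
\sum_{i} \delta x_i^k \;=\; \sum_{i} \bigl( x_{\textit{gained}}^{i,k} - x_{\textit{spent}}^{i,k} \bigr) \;=\; \sum_{(i,j)} \bigl( f_{\textit{retain}}(x_{ij}^k) - x_{ij}^k \bigr) \;\leq\; 0,
\end{equation*}
where the middle equality rewrites each directed transfer once as a ``spent'' term on the beneficiary side and once as a ``gained'' term on the contributor side, and the final inequality is the per-transfer bound applied term-wise. Hence no prestige is created by transfers, which is exactly the theorem.

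The only delicate point is that the proof relies on $f_{\textit{retain}}(x) \leq x$ holding for every retain function used by the system. For the theorem to be invoked uniformly over both simple and progressive mining (introduced in \Cref{sec:simple_mining} and \Cref{sec:progressive_mining}), one should check this property explicitly for each variant; I would flag this as a short verification rather than an obstacle, since the paper's informal description of both variants already implies retention of at most the full fee.
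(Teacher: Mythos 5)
Your argument captures only the instant of the transfer, and the step it omits is the one the paper's proof is actually devoted to. The paper compares, for every later time-slot $t>k$, the trajectory with the transfer against the trajectory without it: since each account evolves by $\delta P^q = C - dP^{q-1}$ (\Cref{eq:prestige_change}), the recipient of $x_{ij}^k$ subsequently loses \emph{more} to decay and the payer loses \emph{less}, and the paper telescopes these effects to show the perturbations $+x_{ij}^k$ and $-x_{ij}^k$ keep cancelling at every $q\geq k$, yielding the exact identity $\bar{P}_i^t+\bar{P}_j^t=P_i^t+P_j^t$ for all $t>k$. You establish that the combined prestige does not increase \emph{at block $k$}, but never argue that this non-increase persists under the subsequent decay dynamics, which is not automatic since the decay term depends on each account's current balance. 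The gap is fillable in one line --- the aggregate update $S\mapsto \sum_i C_i + (1-d)S$ is increasing because $0<d<1$, so a non-increase at time $k$ propagates to all $t>k$ --- but as written that step is absent, and it is precisely the content of the paper's computation.

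Two smaller points. Your middle identity $\sum_i (x_{\textit{gained}}^{i,k}-x_{\textit{spent}}^{i,k}) = \sum_{(i,j)}\bigl(f_{\textit{retain}}(x_{ij}^k)-x_{ij}^k\bigr)$ miscounts in progressive mining: the unretained portion $x-f_{\textit{retain}}(x)$ is not destroyed but forwarded upstream to the contributor's DAG predecessors (\Cref{eq:progressive_transfer}), so it reappears in the $x_{\textit{gained}}$ terms of those nodes. Your conclusion of non-increase survives (in fact the total is exactly conserved), but the bookkeeping as written is wrong for that variant. Relatedly, the paper proves the stronger statement that transfers \emph{conserve} total prestige exactly (an equality, later extended to $N$ users), whereas your route only delivers an inequality; both suffice for ``no gains,'' but the conservation property is what the paper actually uses downstream.
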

\begin{proof}We denote by $P_i^t$, $P_j^t$ the prestige of contributor $i$ and beneficiary $j$ when there is no prestige transfer from $i$ to $j$ and by  $\bar{P}_i^t$, $\bar{P}_j^t$ the prestige of contributor $i$ and beneficiary $j$ when the transfer $x_{ij}^k$ is taking place upon time-slot $t > k$. Then:
\begin{align*}
\bar{P}_i^t+\bar{P}_j^t&=P_i^{k-1}-x_{ij}^k+ \sum\limits_{q=k}^{t}\delta \bar{P}_i^q+P_j^{k-1}+x_{ij}^k+ \sum\limits_{q=k}^{t}\delta \bar{P}_j^q,\\
&=P_i^{k-1}+ \sum\limits_{q=k}^{t}\delta \bar{P}_i^q+P_j^{k-1}+ \sum\limits_{q=k}^{t}\delta \bar{P}_j^q,\\
&=P_i^{k-1}+C_i-d (P_i^{k}-x_{ij}^k)+\sum\limits_{q=k+1}^{t}\delta \bar{P}_i^q\\&+P_j^{k-1}+ C_j-d (P_j^{k}+x_{ij}^k)+\sum\limits_{q=k+1}^{t}\delta \bar{P}_j^q,\\
&=P_i^{k}+\sum\limits_{q=k+1}^{t}\delta \bar{P}_i^q+P_j^{k}+\sum\limits_{q=k+1}^{t}\delta \bar{P}_j^q,\\
\vdots\\
&=P_i^t+P_j^t.\\
\end{align*}That is, prestige transfers do not affect the total prestige that exists in the system.
\end{proof}
The result of \Cref{th:mining} can be easily extended for the general case of $N$ users and all time-slots.

\subsection{Simple Mining} \label{sec:simple_mining}
We define \emph{Simple Mining} to reward services performed uniquely between one contributor and one beneficiary (i.e., renting out contributor's CPU power to perform beneficiary's computations). In \emph{Simple Mining}, when an acknowledgment of service performed by contributor $U_i$ for beneficiary $U_j$ is submitted to the blockchain, the contributor retains the whole amount of transfered prestige $f$ so that $f = x_{ij}^t = -x_{ji}^t$ (\Cref{fig:simple_progressive}). Therefore, the simple mining retain function is defined as:

\begin{equation}\label{eq:retain_simple}
  \begin{aligned}
    f_{\textit{retain}}(x_{ij}) = x_{ij}
  \end{aligned}
\end{equation}

In \emph{Simple Mining}, the retained value is equal to the transferred value. Therefore, \Cref{th:mining} applies and we conclude that, \emph{Simple Mining} is resistant to Sybil and Collude Attacks.

\subsection{Progressive Mining} \label{sec:progressive_mining}
For cases where benefiting from a service, allows the beneficiary to perform useful work for others (e.g., seeding in a content distribution system), we introduce the concept of \emph{Progressive Mining}. In \emph{Progressive Mining, contributors are rewarded by their own useful work, but also for work performed by their beneficiaries.} That is, if $U_i$ performs a service for $U_j$, $U_i$ will receive some prestige for each service performed by $U_j$ and other nodes that $U_j$ provided the service to (\Cref{fig:simple_progressive}). The scheme can be seen as a Directed Acyclic Graph (DAG) with users as nodes and edges representing useful work performed for subsequent users. In this case, \emph{Prestige is flowing from the leaves towards the root.}

\vspace{-5pt}
This type of rewarding scheme is useful in scenarios such as file propagation, where receiving a file allows to distribute it to other users. At the same time, we want to protect the system from distribution manipulations that would change the amount of prestige received by legitimate parties.

In particular, when $U_j$ performs a task for $U_i$, the transferred prestige value $x_{ji}$ is not directly added to $U_j$'s account. Instead, $U_j$ must share earned prestige with its DAG predecessors and can retain only a part of earned prestige:
\begin{equation}\label{eq:progressive_transfer}
  \begin{aligned}
    f_{\textit{retain}}(x_{ji})  = \frac{x_{ji}*P_i}{P_i + P_{b}^i}
  \end{aligned}
\end{equation}
where $P_{b}^i$ is branch power of $U_i$ expressed as the sum of the prestige values of the predecessors of $U_i$ multiplied by a branch power parameter $b$:
\begin{equation}\label{eq:branch_power}
  \begin{aligned}
    P_{b}^i  = \textit{sum\_prestige}(predecessors(U_i)) * b
  \end{aligned}
\end{equation}
Such a branch power function incentivises users to attach to shorter branches, automatically balancing the DAG. 
$U_i$ will thus keep only a part of $x_{ji}$, while the remaining part will be transferred upstream towards the root. If $P_b^i=0$, no prestige is sent upstream (which is the case for the DAG root) and users with no base prestige cannot retain any prestige flowing upstream, which protects the scheme from DAG manipulations using Sybil identities. With increasing $P_b^i$ more prestige is pulled upstream towards the root of the distribution tree.

The whole scheme is based on user's own prestige and the prestige sum of its predecessors. It is fully resistant to topology manipulation using Sybil identities that do not have any prestige. Users also cannot increase their prestige gain by spreading their coins (and thus gain prestige) over several artificial identities. 

\begin{theorem}In Progressive Mining, users cannot retain more prestige by splitting their coins into multiple identities.
\end{theorem}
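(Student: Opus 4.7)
The plan is to combine Theorem~\ref{th:prestige} with a direct analysis of the retain function $f_{\text{retain}}(x_{ji}) = x_{ji}P_j/(P_j+P_b^j)$ from \Cref{eq:progressive_transfer}. Fix a user with total coins $C_i$ partitioned across identities $U_{i,1},\ldots,U_{i,k}$ with $\sum_q C_{i,q}=C_i$. By \Cref{th:prestige} the aggregate long-run static value satisfies $\sum_q S_{i,q}=C_i/d=S_i$, so the user's total base prestige is invariant under any split. What remains is to show that the cumulative prestige retained by the $k$ identities through the progressive-mining cascade is no larger than that of the single-identity baseline.

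Next I would enumerate how the $k$ identities can be placed in the DAG. Any arrangement decomposes into combinations of two elementary configurations: Sybils placed in parallel as siblings of a single ``entry'' identity, and Sybils stacked as a chain of successive predecessors along one upstream path. In the parallel case each identity only earns on transfers directly addressed to it, so \Cref{th:mining} (conservation of prestige under transfers) together with the static-value bound of \Cref{th:prestige} already caps the total retention by the single-identity value. In the chain case, an incoming transfer $x$ at the entry identity with prestige $P_1$ is retained locally as $xP_1/(P_1+b(\sum_{q\geq 2}P_q+P_{\text{gen}}))$, where $P_{\text{gen}}$ is the aggregate prestige of the genuine (non-user) predecessors; the residual is passed upstream proportionally to predecessor prestige, and the inserted Sybils---having no genuine predecessors of their own---retain in full whatever share is directed to them.

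The main obstacle will be carrying out the telescoping sum along the chain: the $k$ individual retentions must collapse algebraically to $x\sum_q P_q/(\sum_q P_q+bP_{\text{gen}})$, which by \Cref{th:prestige} coincides with the retention of a single identity holding all the coins. This collapse rests on the proportional-forwarding convention implicit in \Cref{eq:progressive_transfer,eq:branch_power}, and on the observation that the Sybils inflate $P_b^1$ at the entry identity by exactly the amount that the downstream Sybil chain subsequently reclaims---so the apparent reduction in the entry identity's share is an internal transfer within the user's own identities rather than a gain. Once that invariance is established, the theorem follows: any Sybil strategy, whether parallel, chained, or a hybrid of the two, yields the same aggregate retained prestige as keeping all coins in a single identity.
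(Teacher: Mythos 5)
There is a genuine gap, and it sits exactly where you flag ``the main obstacle'': the telescoping step you need does not hold, because the correct statement is an inequality, not the invariance you claim. Your plan asserts that a chain of Sybil identities inserted on the upstream path retains in aggregate exactly $x\sum_q P_{i,q}/(\sum_q P_{i,q}+bP_{\text{gen}})$, i.e., the same as a single identity, on the grounds that the inserted Sybils ``retain in full whatever share is directed to them.'' But an identity inserted between the entry identity and its genuine predecessors inherits those genuine predecessors: by \Cref{eq:branch_power} its branch power is the (generally nonzero) branch power $P_b^i$ of the original position, so by \Cref{eq:progressive_transfer} it retains only a fraction $P_{i,2}/(P_{i,2}+P_b^i)$ of what reaches it and forwards the rest to the genuine upstream nodes. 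Meanwhile the entry identity's branch power is inflated from $P_b^i$ to $bP_{i,2}+P_b^i$, so it retains less than before. These two effects do not cancel: the paper's proof computes, for the two-identity case, that the aggregate retention equals $x\bigl(P_{i,1}P_{i,2}+P_{i,2}P_b^i+bP_{i,2}^2-bP_{i,2}P_b^i-(P_b^i)^2\bigr)/\bigl((P_{i,1}+bP_{i,2}+P_b^i)(P_{i,2}+P_b^i)\bigr)$ and verifies directly that this is $\leq xP_i/(P_i+P_b^i)$, with equality only in degenerate cases such as $P_b^i=0$. So the theorem is true, but for the opposite reason you give: splitting leaks extra prestige to the genuine predecessors rather than conserving the user's share.

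Two smaller issues. First, your appeal to \Cref{th:mining} for the parallel-siblings case does not do the work you want: that theorem conserves the sum of prestige over the two parties to a transfer, whereas what must be bounded here is how much of an incoming transfer the user's identities collectively retain versus how much escapes upstream to third parties; these are different quantities. Second, the proposal stops at the point where the actual computation would begin --- the paper's entire proof \emph{is} that computation, a direct algebraic verification of the inequality for $k=2$, with \Cref{th:prestige} supplying $P_{i,1}+P_{i,2}=P_i$ for the final identification with $f_{\textit{retain}}(x)$. To repair your argument you would need to replace the claimed exact collapse with the inequality and carry out that algebra, at which point you essentially recover the paper's proof.
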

\begin{proof}Without loss of generality assume that user $i$ divides their coins into 2 identities according to $C_{i,1}$ and $C_{i,2}$, such that $C_{i,1}+C_{i,2}=C_{i}$. Hence, the prestige retained by the multiple identities of user $i$ out of the prestige transferred from user $j$, $x$, will be:
\begin{align*}
f_{retain,1}(x)-x+f_{retain,2}(2x-f_{retain,1}(x))=\\=x\frac{P_{i,1}}{P_{i,1}+bP_{i,2}+P_b^i}-x\\+x\left(2-\frac{P_{i,1}}{P_{i,1}+b P_{i,2}+P_b^i}\right)\frac{P_{i,2}}{P_{i,2}+P_b^i},\\
=x\frac{P_{i,1}P_{i,2}+P_{i,2}P_{b}^i+bP_{i,2}^2-bP_{i,2}P_{b}^i-(P_{b}^{i})^2}{(P_{i,1}+b P_{i,2}+P_b^i)(P_{i,2}+P_b^i)},\\
\leq x\frac{P_{i,1}+P_{i,2}}{P_{i,1}+P_{i,2}+P_b^i},\\
=x\frac{P_{i}}{P_{i}+P_b^i},\\
=f_{retain(x)}.
\end{align*}where the inequality applies since:
\begin{align*}
\frac{P_{i,1}P_{i,2}+P_{i,2}P_{b}^i+bP_{i,2}^2-bP_{i,2}P_{b}^i-(P_{b}^{i})^2}{(P_{i,1}+b P_{i,2}+P_b^i)(P_{i,2}+P_b^i)}\\\leq\frac{P_{i,1}+P_{i,2}}{P_{i,1}+P_{i,2}+P_b^i}\Longleftrightarrow
\\-P_{b}^i\left( bP_{i,1}P_{i,2}+P_{i,1}P_{b}^i+P_{i,2}(bP_{i,2}+P_{b}^i+bP_{b}^i)+(P_{b}^i)^2\right)\\\leq P_{i,1}(P_{i,1}P_b^i+bP_{i,2}P_b^i+(P_b^i)^2).
\\
\end{align*}On the other hand, from Theorem 1 we know that multiple identities have no prestige gains, \ie $P_{i,1}+P_{i,2}=P_i$, and therefore the third equality is valid. Hence, users cannot increase their prestige by creating multiple identities in progressive mining, intuitively due to the prestige payments that is forced to submit to their fake identities that in turn they retain only a portion of the prestige.

\end{proof}

\subsection{Acknowledgments}\label{sec:acknowledgments}
Nodes generate \emph{Acknowledgments} when benefiting from useful work. Contributors earn prestige by submitting a received \emph{Acknowledgment} to the \blockchain, showing that they provided some service to other nodes. 

For \emph{Simple Mining} (\Cref{sec:simple_mining}), acknowledgments are standard digital signatures. The beneficiary generates a signature on an ID uniquely identifying the task, the contributor's public key, and the agreed amount of prestige to transfer. This is transferred to the contributor who uploads it to the \blockchain to trigger the prestige transfer.

For \emph{Progressive Mining} (\Cref{sec:progressive_mining}), acknowledgments are composite signatures~\cite{saxena2014increasing}. Each node in the DAG branch composes its signature with the initial signature generated by the DAG root, forming a composite signature that contains the signature of each node involved in the task. 

When a new task is added, the DAG root node performs services to beneficiaries and collects their signatures $\sigma_{i,ID}$ over an ID uniquely identifying the task, the contributor's public key, and the agreed amount prestige to transfer. The root then submits the signed message to the \blockchain to receive prestige, while beneficiaries can turn into contributors and continue performing services for other users. These nodes start by sending $\sigma_{i,ID}$ to potential recipients. Each recipient checks the validity of the signatures and the task can be performed if the check passes. A beneficiary $U_j$ generates their own signature, and composes it with the previous signature $\sigma_{i,ID}$, obtaining a composite signature $\sigma_{j,ID}$. The beneficiary then sends the resulting composite signature to the contributor who uploads it to the \blockchain in order to update the prestige value of the previous contributors' and the root node. The above process continues as the DAG grows.

Before accepting a service using progressive mining, the beneficiary should verify that the contributor is already included in the DAG. In order to achieve this, the contributor transmits their own composite signature indicating the path from the DAG root to itself---this allows the beneficiary to register the transaction on the \blockchain even if their predecessors do not do it. 
The properties of composite signatures prevent any single or subset of signature(s) from being removed from the composite~\cite{saxena2014increasing}. Furthermore, the system will update the prestige of all nodes even if only the last sender in each branch uploads the composite signature to the \blockchain. However, it is in the best interest of every contributor to upload the composite signature, in case no other subsequent node uploads theirs. 

After benefiting from a service, the beneficiary might refuse to send back the acknowledgment, or might attempt to generate an acknowledgment for another, colluding node. While multiple solutions exist to ensure fair exchange between two mutually distrusting parties~\cite{airtnt, pagnia1999impossibility, dziembowski2018fairswap}, a specific solution should be tailored to the nature of performed tasks. We thus sketch some solutions in Sec. \ref{sec:use_case} and leave more detailed discussion for future work.

\subsection{Prestige Economics}
In the previous sections, we explained the prestige flow between users. However, prestige is a volatile resource and does not represent real assets. High prestige value increases the chances of a node to be elected to submit a new block. That said, in order to incentivise users, it must be bound to a reward expressed in coins. In current blockchain payments systems, such as Bitcoin, the rewards come from \first fees paid by users submitting transactions included in the block and \second new coins being discovered with each new block. The fees protect from Denial of Service (DoS) attacks, but also increase the exploitation cost of the system. On the other hand, new coins increase inflation and de facto reward the successful miner from money stored in the wallets of other users. 

\pop is compatible with both reward methods described above, but additionally, we introduce a third type of reward based on optional fees paid by motivators. Motivators directly benefit from increasing network size and can add coins as an additional reward for mining new blocks. The reward can be specified as an amount of coins distributed per block within limited duration (i.e., 1000 blocks). Motivators can thus incentivise users to participate in a specific task when their services are needed the most. 

\section{Use Cases} \label{sec:use_case}
We present two applications that leverage  \pop to support a reliable and secure incentive mechanism: a Publisher Platform as a system that uses simple mining, and a File Distribution system as a system that needs progressive mining.

\subsection{Simple mining - Publisher Platform}
Steem~\cite{steem} is a popular publishing platform that uses a combination of Proof-of-Brain and Delegated Proof-of-Stake~\cite{dpos} to reward content creators (contributors) and readers (beneficiaries). Readers can vote for interesting content using their Steem Power acquired by committing their coins to a thirteen-week vesting schedule. The best creators and readers are then rewarded by newly created coins. Steem is thus resistant to Sybil attacks (fake identities do not have coins and thus cannot acquire Steem Power), but is vulnerable to collude attacks. As users can recommend unlimited number articles, they are incentivized to cross-recommend their work with other users in order to maximize their reward. 

Introducing our \pop with Simple Mining into Steem could solve this problem. Replacing Steem Power with prestige, introduces a limit on the number of content items that readers can vote for within a specified period of time. Increased amount of coins replenishes prestige faster so richer users have still higher voting power as in the current scheme. Cross-recommendations no longer make sense as they do not increase the combined reward of the involved parties. Readers are thus incentivized to vote for only the best articles they find in order to receive high quality content in the future.

\subsection{Progressive mining - File Distribution}\label{sec:use_case_progressive}
We base our File Distribution use case on already existing systems requiring a secure rewarding scheme for file propagation such as Filecoin/IPFS~\cite{combinatorjune} or NOIA~\cite{noia}. Currently those systems either do not provide incentives for file propagation (IPFS), or are vulnerable to Sybil attacks (NOIA). A content creator (acting as a motivator) wants to distribute and pre-fetch large video files among mobile phone users at the edge of the network in order to increase its viewership. In the current/traditional client-server model, content is pulled from the Content Delivery Network (CDN) server upon the user’s request. Using CDNs, however, involves substantial fees and scales badly when crossing the mobile data link. In our File Distribution system, users themselves contribute to the distribution of the content directly between mobile devices without involving the network or third party CDNs. Effectively, users participate in an \emph{``incentivised content propagation network"}, where they get rewarded for contributing their resources to the network. At the same time, content creators/publishers can significantly reduce the cost of content distribution.

In the beginning, a file is directly transferred from the creator to a few users initiating a distribution DAG with the creator as its root (right part of \Cref{fig:simple_progressive}). The propagation then continues directly between user devices expanding the DAG. Each time a user receives a file, they acts as beneficiary and must generate an acknowledgment for the contributor who sent the file. In order to reduce the risk of a malicious beneficiary walking away without generating an acknowledgment, each file is partitioned into small chunks. A new chunk is sent only if an acknowledgment for the previous one has been received. 

This type of content propagation is a typical use-case for progressive mining: 
a contributor transmits a file to a beneficiary and is rewarded not only for this one-hop transfer, but for all the subsequent transfers in their subtree.

\section{Evaluation} \label{sec:eval}
In order to evaluate the behaviour of Proof-of-Prestige, we developed a Python3 simulator\footnote{\url{https://gitlab.com/mharnen/pop}}. In the following, we investigate a number of factors that influence users' prestige fluctuation and present one scenario for each of those factors.


\textbf{Decay Parameter, $d$:} We start by investigating the behaviour of the prestige correction function shown in \Cref{eq:prestige_change}. \Cref{fig:eval_decay} shows the influence of the decay parameter $d$ on prestige gain. We create 4 users with different amounts of coins and $d$ values. 
All the users start with zero prestige $P^0=0$; we add prestige $\delta P_i=200$ at block $t=100$ to each user, and remove prestige by $\delta P=-200$ at block $t=150$. The number of coins determines the static value (\Cref{eq:static_value}), but the number of coins does not influence the time needed to converge back to the static value. On the other hand, increasing the decay parameter $d$ lowers the static value and reduces the time required by each user to reach their static value from $P=0$. The same applies when users receive ($t=100$) or spend ($t=150$) prestige. A higher decay parameter will make prestige go back to their static values faster. In contrast, reducing the decay parameter increases the value of prestige gained from useful work in comparison to prestige gained from coins.

\textbf{Gained Prestige:} \Cref{fig:eval_sum_decay} introduces 4 users with the same amount of coins $C = 100$ and prestige set to the corresponding static value $P_i = S_i$. We then inject different amounts of prestige per block to each user, reflecting the case where each user has provided services of different value, and let the system run for $10,000$ blocks. We measure the sum of gained prestige for this period for different values of the decay parameter $0<d<1$ (x-axis). The impact of gained prestige (and thus of useful work) increases exponentially for small values of $d$, while it decreases for higher values of $d$ (notably for $d\approx1$, where all the gained prestige is removed in the next block). 

\begin{figure*}[!ht]
\begin{minipage}[t]{.33\textwidth}
  \includegraphics[width=1\linewidth]{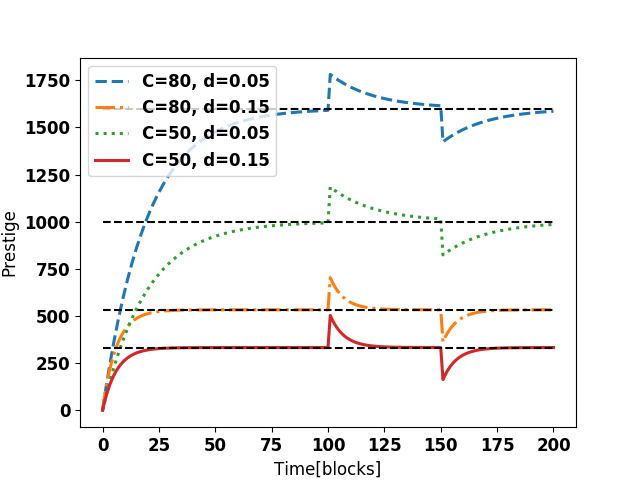}
  \caption{\scriptsize {Prestige over time.}}
  \label{fig:eval_decay}
\end{minipage}%
\begin{minipage}[t]{.33\textwidth}
  \centering
  \includegraphics[width=1\linewidth]{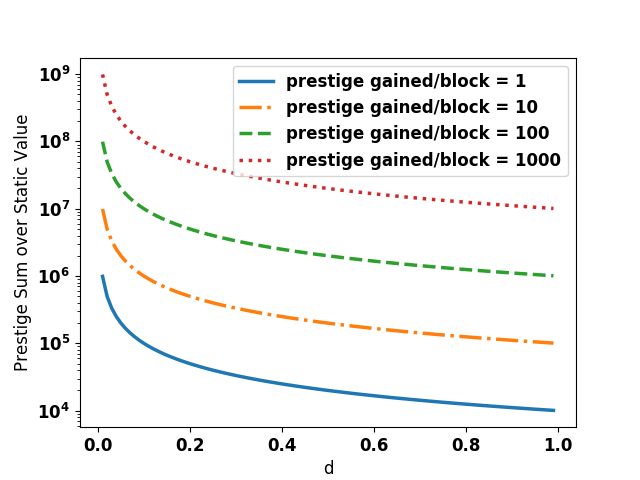}
  \caption{\scriptsize {Prestige sum above static value.}}
  \label{fig:eval_sum_decay}
  \end{minipage}%
\begin{minipage}[t]{.33\textwidth}
  \centering
  \includegraphics[width=1\linewidth]{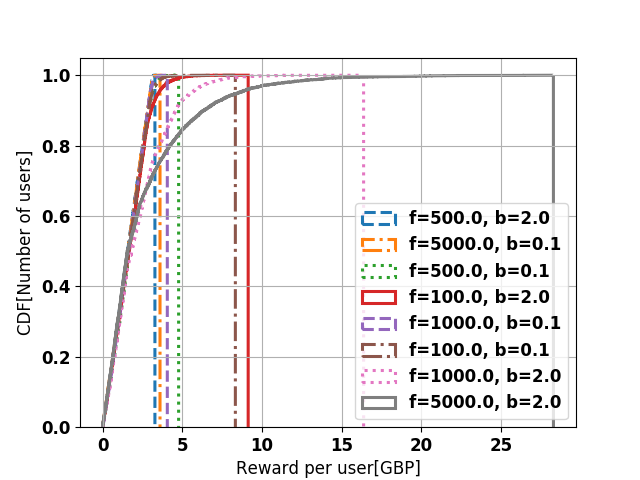}
  \caption{\scriptsize {Reward Distribution.}}
  \label{fig:bbc_rewards}
  \end{minipage}%
  \vspace{-15pt}
\end{figure*}



\textbf{Distance from the DAG root:} We create 1000 users involved in 100 random DAGs, where each edge represents performing useful work between nodes. Initial prestige values follow a uniform distribution from 0 to 100, while the branch power parameter is set to $b = 0.5$. \Cref{fig:eval_distance} presents the average prestige gained by each user and standard error as a function of distance from the DAG root. In simple mining, the root gains significantly more prestige than other users as it acts only as a contributor and does not benefit from (and thus pay for) services. For the other nodes, the distance does not influence the prestige gain and the standard error is low.

In contrast, in progressive mining, users located close to the root have the highest average prestige gain, while with increasing distance, the rewards decrease. This is a desired behaviour, as users close to the root have larger subtrees and collect fees from useful work of their successors. Progressive mining takes into account multiple factors when calculating the reward (\eg base prestige, distance fom the root) which results in increased standard error. 

\textbf{Base Prestige:} We investigate the prestige gain as a function of base prestige (\Cref{fig:eval_prestige}). Simple mining is not influenced by base prestige, while in progressive mining, with increasing base prestige, a user can collect higher rewards. This mechanism prevents the Sybil attack and rewards users who invested more in the system. However, a small fraction of high base prestige users experience prestige loss. Those are users that in spite of having high prestige, benefit from services and do not perform any useful work, which is another desired behaviour.

\textbf{Number of Completed Tasks:} Next, we investigate the effect of useful work to user prestige gain (\Cref{fig:eval_transfers}). For both progressive and simple mining, the average gain increases linearly with the number of services performed for other users. This experiment proves that for both mining modes, users with low base prestige and located further from the root in the distribution tree, can gain significant amounts of prestige by being useful for the network.

\begin{figure*}[!ht]
\begin{minipage}[t]{.33\textwidth}
  \centering
  \includegraphics[width=1\linewidth]{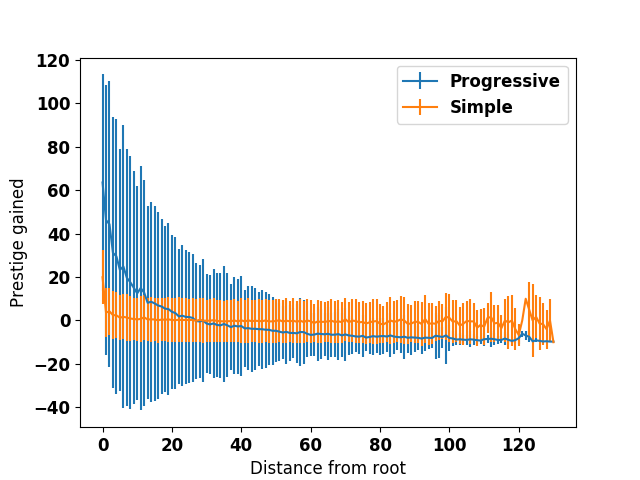}
  \caption{\scriptsize {Distance from root}}
  \label{fig:eval_distance}
\end{minipage}%
\begin{minipage}[t]{.33\textwidth}
  \centering
  \includegraphics[width=1\linewidth]{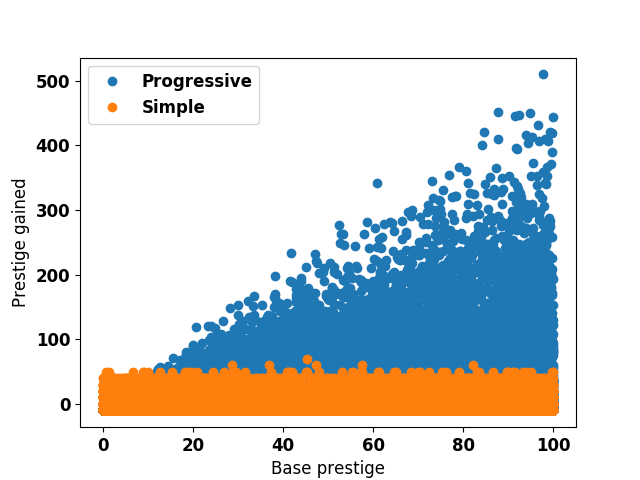}
  \caption{\scriptsize {Base Prestige}}
  \label{fig:eval_prestige}
  \end{minipage}%
\begin{minipage}[t]{.33\textwidth}
  \centering
  \includegraphics[width=1\linewidth]{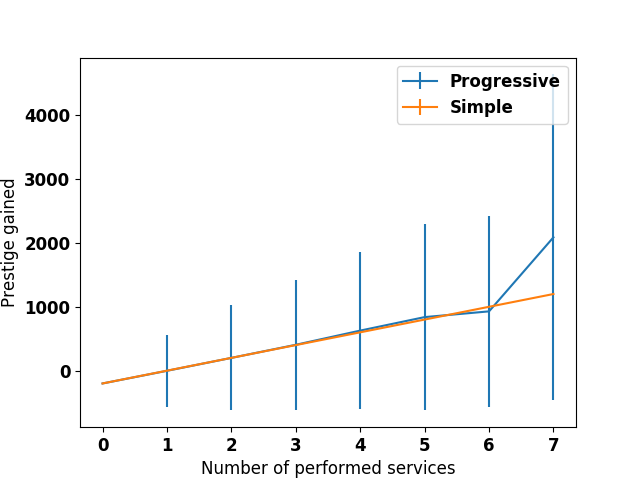}
   \caption{\scriptsize {Useful work}}
  \label{fig:eval_transfers}
  \end{minipage}%
  \vspace{-15pt}
\end{figure*}

\textbf{Contributor Involvement Probability:} We investigate prestige dynamics over time with both simple (\Cref{fig:eval_global_simple}) and progressive (\Cref{fig:eval_global_prog}) mining to provide a global view of the system. We introduce different poor ($C=50$) and rich ($C=100$) users having  $W=5\%$ or  $W=20\%$ probability of performing useful work with each new block in a random DAG containing 100 nodes, service fee $f=200$, and decay parameter $d=0.05$. At the beginning of the simulation, prestige values of each user go from $0$ towards their corresponding static value. In simple mining, users gain steady and moderate amounts of prestige and we do not observe differences between poor and rich users. In comparison, users performing progressive mining can reach much higher prestige gains and increased amount of coins (and thus base prestige) allows richer users to maximize their reward. The amount of performed useful work $W$ is important in both schemes, but its impact is higher in progressive mining, where poor, but active users, can reach prestige values similar to much richer, but less active nodes. 



\textbf{Contribution vs Coins Tradeoff:} We conclude by investigating the total value of acquired prestige over $1,000$ blocks by rich ($C=50$) and poor ($C=10$) users having different probabilities of performing useful work ($W=5\%$ and $W=25\%$) for different values of the parameter $d$ (\Cref{fig:eval_end}). For small values of $d$, useful work is more important than money. Poorer, but more active users, can thus acquire substantial amount of prestige, eventually surpassing rich users. This effect is decreased with increased values of $d$. On the other end of the spectrum, for high values ($d > 20$) the sum of acquired prestige depends mostly on user money, and is independent from the amount of performed useful work.  

\begin{figure*}[!ht]
\begin{minipage}[t]{.33\textwidth}
  \centering
  \includegraphics[width=1\linewidth]{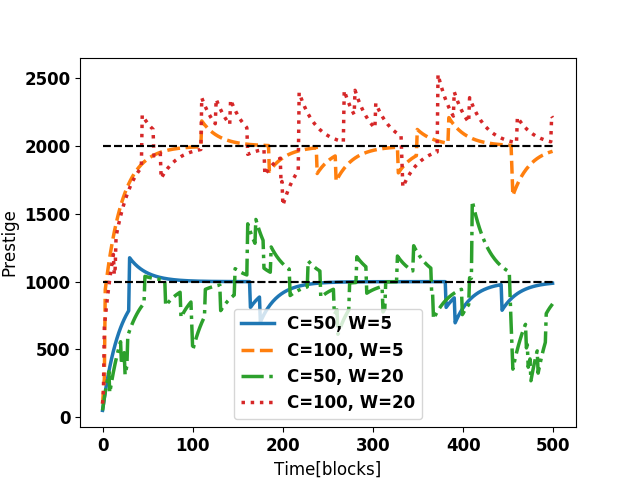}
  \caption{\scriptsize{Prestige evolution for work Simple Mining}}
  \label{fig:eval_global_simple}
\end{minipage}%
\begin{minipage}[t]{.33\textwidth}
  \centering
  \includegraphics[width=1\linewidth]{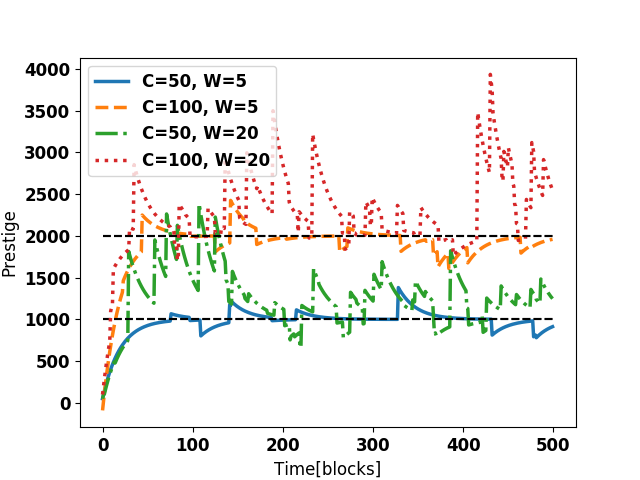}
  \caption{\scriptsize{Prestige evolution for work Progressive Mining}}
  \label{fig:eval_global_prog}
\end{minipage}%
\begin{minipage}[t]{.33\textwidth}
  \centering
  \includegraphics[width=1\linewidth]{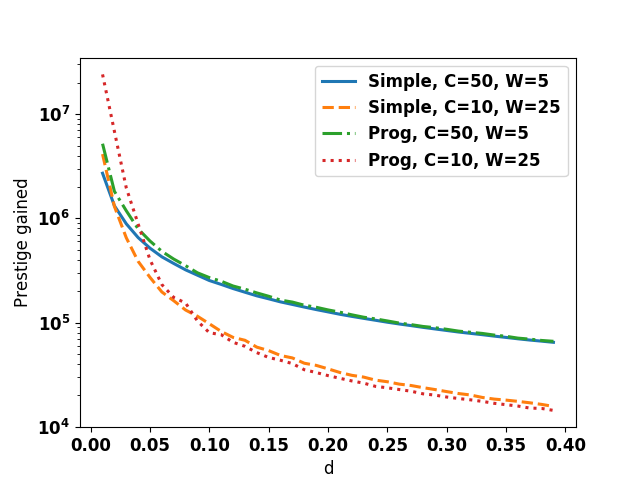}
  \caption{\scriptsize {Prestige sum acquired by different users.}}
  \label{fig:eval_end}
\end{minipage}%
\vspace{-10pt}
\end{figure*}
\textbf{Rewards:} To investigate potential rewards for users using Proof-of-Prestige, we focus on the File Distribution use case presented in \Cref{sec:use_case_progressive} and apply it to popular a BBC Series - \textit{Bodyguard}. BBC does not include advertisement in their content --- thus, each consecutive download increase the broadcaster's cost. With the number of viewers ranges from 14M to 17M\footnote{\url{https://www.barb.co.uk/viewing-data/four-screen-dashboard/}}, and the file size of approximately 250MB, the cost of delivering one series season using a CDN equals 4.7M\$\footnote{\url{http://cdncomparison.com/}}. For each episode of the series, we create a DAG with the corresponding size of users with random amount of base prestige ranging from 1 to 10000 and distribute the money spent on CDN to users proportionally to their prestige after performing useful work. In this scenario users transfer files to maximum 8 users. \Cref{fig:bbc_rewards} shows the reward distribution for different  values of $f$ and $b$. Surprisingly, those parameters have a negligible effect on the majority of the nodes. With changing parameters, user receive, but also transfer upstream different amounts of prestige. Such a behaviour influences mainly the most active nodes located close to the DAG root. With high $f$ and $b$ those nodes can acquire significantly higher amounts of prestige and thus collect higher rewards. The most active users collect up to 30\$ reward, while the distribution remains free for all the users. 

\textbf{Volume of Data Submitted to the Blockchain:} Finally, we approximate the amount of data submitted to the blockchain, which varies between simple and progressive mining. Simple mining requires a separate acknowledgment for each interaction---the number of submitted ACKs equals the number of performed tasks. The size of an ACK for simple mining is about 102 bytes; it is computed as the sum of the size of the composite signature\footnote{We implement composite signatures with BGLS signatures~\cite{bgls}.} (33 bytes), the task ID (32 bytes), the contributor public key (33 bytes), and the amount of prestige transferred (set to 4 bytes). Progressive mining requires an acknowledgment for each leaf in the DAG since composite signatures contain information about all the nodes between the root and the leaf; and the size of the ACK increases linearly with the depth of the DAG. The ACK size is therefore, $(33+69\times n_i)$ bytes, where $n_i$ is the depth of the DAG following path $i$. While this represents a substantial amount of data, acknowledgments can be processed off-chain using platforms such as Plasma~\cite{poon2017plasma} and update users' prestige periodically, significantly decreasing the volume of information kept on the main-chain.





\section{Related Work}

There are currently multiple systems focusing on rewarding miners for useful work. The largest group focuses on proving file storage where a verifier sends a file to a prover and later requests a proof that the prover really stored the file \cite{golle2002cryptographic, bowers2009proofs, di2003lkhw, ateniese2007provable, juels2007pors, protocol2017por, maidsafe, miller2014permacoin}. Alternatively, several platforms allow the prover to convince that the prover has access to some space \cite{honsi2017spacemint} \cite{chia} \cite{perito2010secure} \cite{dziembowski2011one} \cite{karvelas2014efficient} \cite{ateniese2014proofs}. Additionally one can require that the proof implies that the space also was erased \cite{perito2010secure},\cite{karvelas2014efficient}, or some function can only be computed in forward direction \cite{dziembowski2011one}. In all those cases, the network must be able to reliably verify the completed task to reward miners which narrows the scope of supported tasks to a small group. Some solution replace traditional Proof-of-Work with useful mathematical tasks that are easy to verify such as poynomials evaluation\cite{ball2017proofs, hu2015secure}. However, such systems support only one type of tasks and does not accept custom ones requested by users. 

Another family supports broader range of tasks (such as custom computation tasks)\cite{airtnt} \cite{krol2018spoc} \cite{iexec} \cite{chen2017security}, but relies on Trusted Execution Environments (TEEs) and Remote Attestation Protocols \cite{costan2016intel} \cite{winter2008trusted} to verify that the computations are being run on a genuine platform and the results are correct. However, TEEs are not available on every platform, require users to trust hardware vendors and are susceptible to side channel attacks \cite{gotzfried2017cache} \cite{weichbrodt2016asyncshock}. In contrast, \pop does not make any assumptions on users hardware. 
Some open platforms rely on centralized 3rd parties acting as consents to validate tasks or perform conflict resolution \cite{golem} \cite{sonm} \cite{bounty0x}. However, for those system to work correctly, the 3rd parties must be trusted by all network participants. Such an approach contrasts with the idea of open and trustles system lying behind blockchain and exposes the network to multiple colluding attacks. 

Finally, there exists multiple industrial projects in which the network operator (motivator) wants to reward users for performing useful work that is difficult or impossible to prove to a 3rd party making those system susceptible to Sybil attacks. Such tasks include content creation \cite{steem}, content distribution \cite{noia} or providing hardware for game players \cite{playkey}. The need for a blockchain-based reputation system/reward scheme has also been expresed for multiple systems ranging from the cloud \cite{habib2010cloud, hwang2010trusted, resnick2002trust} to IoT devices \cite{huckle2016internet, sun2016blockchain}. \pop can be a valuable addition to those system allowing to reliably reward users for truly performed tasks.

\section{Conclusion}
We presented \emph{Proof-of-Prestige} (PoP)---a reward system that can run on top of any Proof-of-Stake blockchain. We introduce the notion of \emph{Prestige} that is a volatile and renewable resource, is generated from coins and useful work, and can be spent to benefit from services. In \pop, each user's probability of minting a new block is directly determined by their prestige. 

In contrast to \pos, where the amount of coins (stake) a user has is the only resource that determines who mines the new block, \pop allows the network to reward contributors for their useful work acknowledged by beneficiaries. Our scheme is resistant to Sybil and Collude attacks and can be used in multiple scenarios without requiring to prove task completion to the network. Rather, a task is considered to be completed once confirmed by its beneficiary. 

We presented two variants of our scheme---simple and progressive mining, and showed how they can be used in real-world scenarios. Our evaluation confirmed that within both schemes users with low amounts of coins who contribute to the network can acquire significant amounts of prestige, similar to rich and ``lazy'' users. \pop reduces inequalities present in \pos and incentivises users to perform useful work.

\bibliographystyle{ieeetr}
\bibliography{pop.bib}

\end{document}